\newtheorem{theorem}{Theorem}
\theoremstyle{plain}
\newtheorem{corollary}{Corollary}
\newtheorem{lemma}{Lemma}
\numberwithin{equation}{section}
\begin{document}

\title[On the solution of Four Color problem]{The chromatic class and the chromatic number of the planar conjugated triangulation}
\author{Natalia L. Malinina}
\address[Moscow Aviation Institute (National Research University), 4, Volocolamskiy Shosse, Moscow, GCP-4, Russian Federation]
%{Author OneTwo common address, line 1 
\newline%
%\indent Author OneTwo common address, line 2}%
\email[Natalia Malinina]{malinina806@gmail.com}%

%\urladdr{http://www.authorone.uni-aone.de}

\thanks{This paper is in preliminary form and it will be submitted for publication in some journal.}
\date{December, 2012}
\subjclass{Primary 05C10; Secondary 05C15} %
\keywords{dual triangulation, dual matrixes, dual graphs, four color problem} 
\dedicatory{Dedicated to my father Leonid Malinin}

\begin{abstract}
This material is dedicated to the estimation of the chromatic number and class of the conjugated triangulation and of the second conversion of the planar triangulation. This estimation may through some light on the difficulties, connected with the proof of Four Color Problem. Also this paper introduces some new hypotheses, which are equivalent to Four Color Problem.
\end{abstract}
\maketitle

\section {Introduction}

The chromatic number gives us the minimal number of the colors with the help of which we will be able to color rightly the vertexes of the graph. It means that no adjacent vertexes will be colored in the same color. It is known that for the planar graph the chromatic number is not more than 5 \cite{Berge}. The task of the graph's chromatic number estimation is called as the task on the graph's coloring. The graph vertexes' coloring, which corresponds to chromatic number, divides the set of the graph's vertexes in the subsets, which number is equal to the chromatic number. Each of these subsets contains the vertexes of one color. These subsets appear to be independent, because in the range of one subset there are no adjacent vertexes. The chromatic number cannot be found on the base of knowing the numbers of both the vertexes and the edges of the graph. The task of the chromatic number's estimation for the arbitrary graph is the task of many researchers \cite{Chrisofides}.

\section {The estimation of the chromatic class and number of the planar conjugated triangulation}

Let us continue to examine the properties of the planar conjugated triangulation and move to the estimation of its chromatic number and class. Such estimation can be done with the help of Brooks' and Shannon's theorems. But let us begin with formulations of some other theorems. 

\begin {theorem} [Berge's Theorem]
Any planar graph appears to be 5-chromatic \cite{Berge}.
\end {theorem}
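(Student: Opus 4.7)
The plan is to prove the Five Color Theorem by induction on the number of vertices $n$, using as the key combinatorial input the fact that every planar graph contains a vertex of degree at most $5$. I would first dispose of the base cases $n\le 5$ trivially, then assume the statement holds for all planar graphs on fewer than $n$ vertices and pick a planar graph $G$ on $n$ vertices. Invoking Euler's formula $V-E+F=2$ together with the inequality $2E\ge 3F$ (valid whenever $G$ has at least one cycle), I would derive $E\le 3V-6$, and hence the average degree is strictly less than $6$, so some vertex $v\in G$ has $\deg(v)\le 5$.

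Next I would remove $v$ and apply the inductive hypothesis to the planar graph $G-v$, obtaining a proper $5$-coloring $c$ of $G-v$. If $\deg(v)\le 4$, or if $\deg(v)=5$ but the neighbors of $v$ use at most $4$ of the $5$ colors, then extending $c$ to $v$ by any unused color finishes the argument.

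The main obstacle is the remaining case where $v$ has exactly five neighbors $v_1,\dots,v_5$ (cyclically ordered around $v$ in a fixed planar embedding) receiving five distinct colors $1,2,3,4,5$. Here I would use a Kempe chain argument: let $H_{1,3}$ denote the subgraph of $G-v$ induced by the vertices of colors $1$ and $3$. If $v_1$ and $v_3$ lie in different connected components of $H_{1,3}$, then swapping colors $1$ and $3$ in the component containing $v_1$ produces a proper $5$-coloring of $G-v$ in which color $1$ is free at $v$, and I color $v$ with $1$. Otherwise a $(1,3)$-path connects $v_1$ to $v_3$ inside $G-v$; together with $v$ this forms a closed curve in the plane that separates $v_2$ from $v_4$, so by planarity no $(2,4)$-path can join $v_2$ to $v_4$ in $H_{2,4}$. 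Swapping colors $2$ and $4$ in the component of $H_{2,4}$ containing $v_2$ then frees color $2$ at $v$, completing the extension.

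The delicate point will be the planarity-based separation argument that rules out a simultaneous $(1,3)$-path from $v_1$ to $v_3$ and a $(2,4)$-path from $v_2$ to $v_4$; this is where the embedding, and not merely the abstract graph structure, is essential, and it is the step that genuinely distinguishes the planar case from the false analogous statement for four colors.
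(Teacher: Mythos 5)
The paper does not prove this statement at all: it is quoted as a known classical result with a citation to Berge, so there is no in-paper argument to compare against. Your proposal is the standard Heawood proof of the Five Color Theorem and it is correct: the existence of a vertex $v$ with $\deg(v)\le 5$ follows from $E\le 3V-6$ (with the harmless caveat that forests, where $2E\ge 3F$ need not apply, trivially contain a vertex of degree at most one), and the Kempe-chain case analysis for $\deg(v)=5$ is exactly right. The one step you flag as delicate --- that a $(1,3)$-path from $v_1$ to $v_3$ together with $v$ bounds a Jordan curve separating $v_2$ from $v_4$, so that no $(2,4)$-path can join them without meeting the curve in a vertex colored $1$, $3$, or equal to $v$ --- is indeed the crux, and your identification of it as the point where the planar embedding (not just abstract planarity of the graph) enters is accurate. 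This is a complete and correct route to the theorem, and arguably the paper would benefit from including such a proof rather than relying solely on the citation.
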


The lowest valuations of the chromatic number are more interesting than the upper valuations, because they can be used in the procedures for the estimation of the chromatic number, which include the search tree if they are off base to the true value. The upper valuations cannot be used in such cases. However, the upper valuation, which can be easily determined, may be developed with the help of Brooks theorem for the planar graphs. Other valuations can be found at Welsh \cite{Welsh} and \cite {Wilf}. But upper valuations do not have a lot of practical significance.

\begin {theorem} [Tutte's theorem]
Let $G$ --- be the graph without the loops. Then $P(G,n)\geq0$ for all the positive integer $n$ and at $n\geq|V(G)|$ this inequality appears to be the strict one. If $P(G,n)=0$ then $P(G,m)=0$ for the positive integer $m>n$. This number is named as the chromatic number of the graph.
\end {theorem}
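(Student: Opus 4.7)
The plan is to work directly from the combinatorial definition of $P(G,n)$ as the number of proper vertex colorings of $G$ drawn from a palette of $n$ colors. All three assertions in the theorem then reduce to very short arguments about the set of such colorings, so the proof is essentially a matter of carefully extracting each claim from that definition; the hypothesis that $G$ has no loops enters in exactly one place.

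First I would dispatch the inequality $P(G,n)\geq 0$ by observing that $P(G,n)$ is defined as the cardinality of a finite set of functions $c:V(G)\to\{1,\dots,n\}$, so it is a nonnegative integer by construction. Next, for the strict inequality when $n\geq|V(G)|$, I would exhibit a single proper coloring explicitly: enumerate the vertices $v_1,\dots,v_k$ with $k=|V(G)|\leq n$ and set $c(v_i)=i$. Since every pair of adjacent vertices then receives distinct colors, such a $c$ is proper, provided no vertex is adjacent to itself. This last condition is precisely what the loop-free hypothesis guarantees, and it is the only point in the argument where that hypothesis is used.

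For the monotonicity assertion, I would argue the contrapositive form: if $G$ admits a proper coloring with some number $n_0$ of colors, then it admits one with any larger number of colors. Given a proper $c:V(G)\to\{1,\dots,n_0\}$, the same function viewed as a map into $\{1,\dots,m\}$ for $m>n_0$ remains a proper coloring (the unused colors $n_0+1,\dots,m$ simply do not appear in the image). Hence $P(G,m)>0$ whenever $P(G,n_0)>0$, which gives the stated implication between the zero values of $P$ once one chooses the right direction of the inequality. Combining the three parts, the least $n$ for which $P(G,n)>0$ is a well-defined positive integer no larger than $|V(G)|$, and it is this integer that is named the chromatic number.

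The main ``obstacle'' is really only the vigilance required at step two: without the loop-free assumption a vertex carrying a loop would be adjacent to itself, forcing $c(v)\neq c(v)$ for any would-be proper coloring, and every $P(G,n)$ would collapse to zero. All the remaining content is bookkeeping on finite sets of functions, so I do not anticipate any deeper difficulty; the entire proof should fit comfortably in a few lines once the three observations above are written out.
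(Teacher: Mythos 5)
The paper offers no proof of this statement at all --- it is quoted as a known result (attributed to Tutte) and used only as background --- so there is nothing in the source to compare your argument against; your proposal has to stand on its own, and it essentially does. Taking $P(G,n)$ to be the number of proper colorings $c:V(G)\to\{1,\dots,n\}$, your three observations (nonnegativity as a cardinality, the injective coloring when $n\geq|V(G)|$, and the fact that a proper coloring survives enlarging the palette) are all correct, and you are right that the loop-free hypothesis is used exactly once, to ensure no vertex is adjacent to itself. You also correctly detected that the theorem as printed is false as literally stated: with your monotonicity argument, $P(G,n)=0$ forces $P(G,m)=0$ only for $m<n$, not $m>n$ (e.g.\ $P(K_3,2)=0$ while $P(K_3,3)=6$), so the paper's inequality is reversed and your ``once one chooses the right direction'' remark is the honest reading. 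The one caveat worth making explicit: if $P(G,n)$ is meant to be the chromatic \emph{polynomial} (defined, say, by deletion--contraction) rather than the counting function, then the identification of its values at positive integers with the number of proper colorings is itself a small induction that your proof silently assumes; stating which definition you adopt would close that gap. With that said, your elementary set-counting route is the standard and correct one, and it is more than the paper itself provides.
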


Let us formulate Brook's theorem for the conjugated triangulation.

\begin {theorem} [Brooks' theorem]

Let graph $H(V,Q)$ meet the requirements \cite{Brooks}:
\begin {enumerate}
\item
For each $v_h\in V\rho_h(v)\leq\rho_h$, where $\rho_h\geq3$.
\item
No connectivity component of the graph $H$ is the complete $(\rho_{h+1})$-vertex graph $H_{\rho+1}$
\end {enumerate}
Here: $\rho_h$ --- is the maximal degree of the vertex. 
Then $\gamma(H)\leq\rho_h$.
\end {theorem}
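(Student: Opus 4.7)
The plan is to establish $\gamma(H)\leq\rho_h$ by exhibiting an ordering $v_1,\ldots,v_n$ of $V$ along which the greedy coloring (assign to each $v_i$ the smallest color not already used on its earlier neighbors) spends at most $\rho_h$ colors. This will succeed as soon as every $v_i$ with $i<n$ has at most $\rho_h-1$ earlier neighbors and the last vertex $v_n$ sees at most $\rho_h-1$ distinct colors on its neighborhood. Because each connected component can be colored independently, I would reduce at the outset to the case that $H$ is connected, not equal to $H_{\rho_h+1}$, and of maximum degree exactly $\rho_h$.

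First I would dispose of the non-regular subcase. Pick $v_n$ with $\rho_h(v_n)<\rho_h$ and enumerate the remaining vertices in reverse breadth-first order rooted at $v_n$. Every $v_i$ with $i<n$ then possesses a BFS parent $v_j$ with $j>i$, so at most $\rho_h-1$ earlier neighbors, and greedy coloring uses at most $\rho_h$ colors.

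In the $\rho_h$-regular subcase I would search for a triple $(u,v,w)$ with $u,w$ non-adjacent neighbors of a common vertex $v$ such that $H\setminus\{u,w\}$ remains connected. Given such a triple, list $u=v_1$, $w=v_2$, put $v=v_n$ last, and enumerate the remaining vertices by reverse breadth-first search in $H\setminus\{u,w\}$ rooted at $v$. Coloring $u$ and $w$ with the same color forces only $\rho_h-1$ distinct colors to appear on the $\rho_h$ neighbors of $v$, so a permissible color is still available when the enumeration reaches $v$, while every intermediate vertex has a later neighbor as before.

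The main obstacle is the existence of such a triple $(u,v,w)$ in the regular case, and I would handle it by splitting on vertex connectivity. If $H$ has a cut vertex, I would apply the statement inductively to the blocks meeting at that cut vertex and align the partial colorings. If $H$ is $2$-connected but not $3$-connected, I would exploit a $2$-vertex separator to place $u$ and $w$ in distinct components, with $v$ chosen from the separator so that $H\setminus\{u,w\}$ stays connected through $v$. If $H$ is $3$-connected, then since $H\neq H_{\rho_h+1}$ some vertex $v$ must admit two non-adjacent neighbors $u,w$, and $3$-connectivity guarantees that $H\setminus\{u,w\}$ remains connected. This connectivity-based dissection is the delicate kernel of the argument; once it is carried out, the greedy-ordering step closes the proof.
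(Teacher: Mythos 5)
The paper offers no proof of this statement at all: it explicitly ``confines itself to formulating Brook's theorem, because it is well known'' and simply cites Brooks (1941), so there is no in-paper argument to compare yours against. Your outline is the standard Lov\'asz-style proof --- greedy coloring along a reverse breadth-first ordering, the non-regular case handled by rooting at a vertex of small degree, and the regular case by pre-coloring two non-adjacent neighbors $u,w$ of a vertex $v$ with the same color so that $v$ sees at most $\rho_h-1$ colors --- and the overall plan is sound. The one place where your sketch, taken literally, does not yet close is the $2$-connected but not $3$-connected case: choosing $u$ and $w$ as neighbors of a separator vertex $v$ lying in distinct components of $H\setminus\{v,y\}$ does not by itself guarantee that $H\setminus\{u,w\}$ remains connected, since $u$ or $w$ could be a cut vertex of its own side. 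The standard repair is to choose $v$ so that $H\setminus v$ has a cut vertex, and to take $u,w$ to be neighbors of $v$ that are non-cut vertices lying in two distinct end-blocks of $H\setminus v$; non-adjacency of $u,w$ and connectivity of $H\setminus\{u,w\}$ then both follow, using $\rho_h\ge 3$ to ensure $v$ keeps a third neighbor in the reduced graph. With that correction (and the routine check that in the cut-vertex case each block contains the cut vertex with degree below $\rho_h$, so the non-regular argument applies blockwise), your argument is complete and, unlike the paper, actually supplies a proof rather than a citation.
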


Let us confine ourselves only to formulating Brook's theorem, because it is well known.
In our case, for the planar conjugated triangulation, when the degrees of the vertexes are not more than four, Brooks' theorem will be formulated in the following way:

The conjugated triangulations meet the following requirements:

\begin {enumerate}
\item
For each $v_h \in V[\rho_h(v)\leq \rho_h]$, where $\rho_h \geq 3$.
\item	
No connectivity component of graph $H$ is the complete 5-vertexes graph $F_5$.
\end {enumerate}
Then $\gamma(H)\leq4$.

Brooks' theorem gives us the upper estimation of the chromatic number for the planar conjugated triangulation. But as far as we are searching the proof of the Four Color Problem it is necessary to prove that the chromatic number of the planar conjugated triangulation must be: $\gamma(H)\leq3$.

Let us further introduce some new hypotheses, which is also equivalent to the Four Color Problem, notably:

\subsection {Hypothesis 10}\footnote{The number of the Hypothesis comes from the list of them \cite{Zukov}}.

The vertexes of the planar conjugated triangulation can be painted in three colors thus no adjacent vertexes will be colored equally.

In compliance with Brooks' theorem $\gamma(H)\leq4$. Brooks' theorem is correct at $\rho_h\geq3$. It is evident that it will be all the more correct at $\rho_h\geq2$.

Let us compose graph $G(Q,\Gamma)$, as the adjacent edge graph of graph $H$, complying the following rules (Fig.\ref {Fig:1}):

\begin {enumerate}
\item
Let us sign with a dot the centers of each graph's $H$ edge and accept these dots as the vertexes $q$ of graph $G$.
\item
Let us connect the graph's $G$ vertexes by the edges, if the corresponding edges have the common (shared) ending.
\end {enumerate}

\begin{figure}[htb]
	 \includegraphics[width=0.95\textwidth]{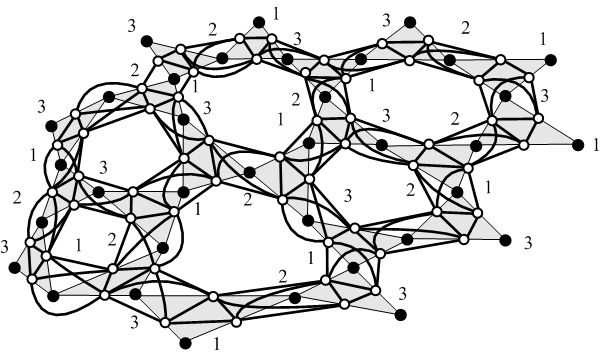}
	\caption{}
	\label{Fig:1}
\end{figure}

As four edges are incident to each one of the graph's $H$ vertexes $v_h$ (excluding the inner vertexes), then six edges $((4-1)*2)$ are incident to two vertexes, which appear to be the endings of any edge. It turns out that the vertexes of  graph $G$ have the degrees not more than 6. Let us suggest that the Four Color Problem is true and the vertexes of the conjugated triangulation can be colored in three colors. Then each graph's $H$ edge has on its ends the vertexes of different colors. A pair of these colors makes up one of the possible combinations from three colors by two ones. We have 6 such combinations.

So, the following statement becomes evident, which also presents the next hypothesis also equivalent to Four Color Problem.

\subsection {Hypothesis 11}

The edges of the planar conjugated triangulation (graph $H$) can always be painted in six colors in such a way, that no adjacent edges will be painted equally.

But the last hypothesis is true under the Shannon's theorem, which gives us the upper bound for the edge coloring.

So under the Brooks' theorem hypothesis 11 is true, but it appears to be a more strong statement, than the Brooks' theorem, that's why it can be used as the necessary requirement, notably: for the Four Color Problem to be true it is necessary that the edges of the planar conjugated triangulation could be painted in 6 colors.
 
But the question remains: is it possible in all cases and at what conditions is it possible?

Then the sufficient condition can be the following: the edges of the planar triangulation always can be painted in such 6 colors, which represent 6 different combinations from three by two. In its turn the vertexes of the planar triangulation right along can be painted in the specified three colors so, that any pair of the vertexes colors makes the necessary edge color.

In order to prove the last statement we are to examine quite a difficult problem on the existence of such a set of the standard operators, which will always permit to convert three colors of the vertexes of the arbitrary planar conjugated triangulation into 6 colors of its edges and backwards. At that all the requirements on the right coloring must be met.

That's why let us formulate one more hypothesis, which is also equivalent to the Four Color Problem.

\subsection {Hypothesis 12}

The graph's $G(Q,\Gamma)$  vertexes ${q_i }$, which is the adjacency graph of the planar conjugated triangulation $H(Q,V)$, always can be painted in six colors so, that each of the stated 6 colors could be presented as one combination $(C_3^2)$ from three by two, which are generated by the edges of the direct graph $H(Q,V)$ from its vertexes' colors. 

And finally, let us introduce one more hypothesis, which can be formulated this way.

\subsection {Hypothesis 13}

The chromatic number of any planar conjugated triangulation must be equal to not more than 3.

The estimation of the chromatic class of the planar conjugated triangulation can be done with the help of Shannon's theorem \cite {Shannon}.

\begin {theorem} [Shannon's theorem]
If graph $H$ has no loops, then $\chi(H)\leq[3/2 \rho_h(H)]$, where $\chi(H)$ --- the chromatic class of graph $H$ (a number of the colors at the right coloring of the edges); $\rho_h (H)$ --- the maximal degree of graph's $H$ vertexes \cite {Weaver}.
\end {theorem}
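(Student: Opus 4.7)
The plan is to argue by induction on the number of edges of $H$, with palette size $k := \lfloor \tfrac{3}{2}\rho_h(H) \rfloor$. For the inductive step, I choose an edge $e = uv$, apply the inductive hypothesis to $H - e$ (whose maximum degree is still at most $\rho_h(H)$) to obtain a proper $k$-edge-coloring $c$ of $H - e$, and then seek to extend $c$ to $e$, possibly after a local recoloring of colors already in use.

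For $w \in \{u,v\}$ let $M(w) \subseteq \{1,\dots,k\}$ denote the set of colors missing at $w$ under $c$. Since each of $u$ and $v$ has degree at most $\rho_h(H) - 1$ in $H - e$, one has
\[
|M(w)| \;\geq\; k - \rho_h(H) + 1 \;=\; \lfloor \tfrac{1}{2}\rho_h(H) \rfloor + 1.
\]
If $M(u) \cap M(v) \neq \emptyset$, I assign any common missing color to $e$ and the extension is immediate. Otherwise I pick $\alpha \in M(u)$ and $\beta \in M(v)$ and form the maximal $(\alpha,\beta)$-alternating path $P$ starting at $u$ along its $\beta$-edge. If $P$ does not reach $v$, swapping $\alpha$ and $\beta$ along $P$ keeps $c$ proper, makes $\beta$ missing at $u$, and leaves $\beta$ missing at $v$; I then color $e$ with $\beta$.

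The delicate case is when, for every choice of the pair $(\alpha,\beta)$, the chain $P$ terminates at $v$. Here I would invoke a Vizing-style multi-fan argument at $u$: build a sequence of edges $uw_1, uw_2, \dots$ whose colors are successively drawn from the missing sets $M(v), M(w_1), M(w_2), \dots$, and then perform a Kempe swap on a carefully chosen color pair to break the forced alignment. The abundance $|M(u)|, |M(v)| \geq \lfloor \tfrac{1}{2}\rho_h(H)\rfloor + 1$ together with the degree cap $\rho_h(H)$ at both endpoints ensures that the $|M(u)|\cdot|M(v)|$ candidate pairs cannot all be simultaneously blocked by $(\rho_h(H)-1)^2$ pairs of (first-edge-at-$u$, last-edge-at-$v$); some pair must produce a chain that avoids $v$, and after the corresponding swap a color becomes missing at both endpoints of $e$.

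The main obstacle will be the bookkeeping in this last step, since $H$ is allowed to have multiple edges: each edge parallel to $e$ occupies one color of both $M(u)$ and $M(v)$ simultaneously, so the counts above must be tightened by the multiplicity of $uv$, and this is precisely where the constant $\tfrac{3}{2}$ (rather than $1$ as in Vizing's theorem for simple graphs) becomes essential. Tightness is witnessed by the ``fat triangle'' on three vertices carrying $\tfrac{1}{2}\rho_h(H)$ parallel edges on each side, whose chromatic class equals $\lfloor \tfrac{3}{2}\rho_h(H)\rfloor$ exactly. Making the multi-fan counting precise, together with disposing of the small base cases $\rho_h(H) \in \{1,2,3\}$ by direct inspection, are the two technical points that deserve the most care.
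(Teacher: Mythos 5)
The paper offers no proof of this statement at all: it is quoted as Shannon's classical 1949 theorem with a citation, and is only used as a black box to bound the chromatic class of the conjugated triangulation by $6$. So there is no argument of the author's to compare yours against; the only question is whether your sketch stands on its own. It does not yet, because of a genuine gap in the ``delicate case.''

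Your induction, the bound $|M(w)|\geq \lfloor\tfrac12\rho_h(H)\rfloor+1$ on the missing-colour sets, and the Kempe-swap when some $(\alpha,\beta)$-chain from $u$ misses $v$ are all fine, as is the fat-triangle tightness example. The problem is the counting you use to dispose of the case where every chain from $u$ ends at $v$. The map sending a pair $(\alpha,\beta)\in M(u)\times M(v)$ to (first edge of the chain at $u$, last edge at $v$) is indeed injective, but that only shows the number of blocked pairs is at most $(\rho_h(H)-1)^2$, and
\[
\bigl(\lfloor\tfrac12\rho_h(H)\rfloor+1\bigr)^2 \;\leq\; (\rho_h(H)-1)^2
\]
for every $\rho_h(H)\geq 4$ (with equality at $4$), so no pair is forced to be unblocked and no contradiction results. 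A two-vertex count of missing colours can never produce the constant $\tfrac32$; the standard completion (Shannon's own argument, or the Vizing-fan version) must bring in a \emph{third} vertex $z$, namely the other endpoint of the $\beta$-coloured edge at $u$ that starts the chain, show that in the fully blocked situation the three sets $M(u)$, $M(v)$, $M(z)$ are forced to be pairwise disjoint (after suitable fan rotations), and then add the three inequalities $|M(u)|\geq k-d(u)+1$, $|M(v)|\geq k-d(v)+1$, $|M(z)|\geq k-d(z)$ to get $2k\leq d(u)+d(v)+d(z)-2\leq 3\rho_h(H)-2$, contradicting $k=\lfloor\tfrac32\rho_h(H)\rfloor$. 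That three-vertex step is precisely what you wave at with ``a Vizing-style multi-fan argument'' and then replace with a count that does not close; until it is carried out, the proof is incomplete. (For the narrow use the paper makes of the theorem, $\rho_h(H)\leq 4$ and $k=6$, your count sits exactly at the boundary case $9=9$, so even there the argument does not go through as written.)
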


Because of the Shannon's theorem for the planar conjugated triangulation, where the degrees of the vertexes are: $\rho_h (H)\leq4$, it follows that its chromatic class is always not more than six: $\chi(H)\leq6$.

Further let us specify the lowest estimation for the chromatic number of the planar conjugated triangulation.

\section {The chromatic number of the planar conjugated triangulation}

We are interested in the coloring of graph's $H$ vertexes. It is known that graph's $H$ edges can be painted with not more than 6 colors. Each edge has on its ends the vertexes of different colors. That is why any of six colors of the edges can be considered as an ordered pair of the vertexes' colors. Let us agree to sort these pairs according to the line of the Euler circuits, which pass along the vertexes and the edges of graph $H$.

So, the color of any inner vertex of graph's $H$ will take part in four ordered pairs of the colors. In two cases this color will stand in the first place (at Euler circuit's exit from the vertex), and in other two cases   in the second place (at Euler circuit's entrance into the vertex). A color of any external vertexes of graph's $H$ will also take part in two ordered pairs: once in the first place, and the other time ---  in the second place. If we identify the colors of the vertexes with the vertexes, we'll be able to say that each vertexes color has an equal number of exits and entrances.

Let us introduce the concept of the abstract $k$-chromatic graph $R$ (Fig. \ref {Fig:2}). Let it be so, that in graph $R_k (V_0,Q_0)$ each vertex $v_\alpha$ corresponds to the subset $\{v_{\alpha i}\}_H$ of graph's $H$ vertexes, which have the color $\alpha$. The degree of each graph's $R_k$ vertex must be the even one and must not be less than the degree of graph's $H$ vertexes. And the degrees of graph's $H$ vertexes are even and they have the following degrees: $\rho(v_H)\leq4$. Each edge (oriented) of graph $R_k$, which connects the vertexes $v_\alpha$ and $v_\beta$ belongs to the subset of the edges, which connect the vertexes of the color $\alpha$ with the vertexes of the color $\beta$ in graph $H$.

\begin{figure}[htb]
		\includegraphics[width=0.4\textwidth]{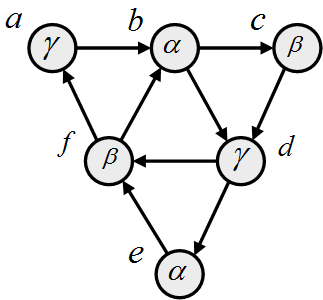}
	\caption{}
	\label{Fig:2}
\end{figure}

Let us construct graph $R_k$, meaning that it appears to be the abstract mapping of graph $H$ and must reproduce its main characteristics \cite {Malinina5}:

\begin {enumerate}

\item
Graph $R_k$ must have a number of the vertexes equal to graph's $H$ chromatic number.
\item
The number of graph's $R_k$ edges must be equal to its chromatic class.
\item
Graph $R_k$ must permit the existence of Euler circuit.
\item
Graph $R_k$ must contain the minimal cycles of length 3.
\item
Matrix $R_k$, also as matrix $P_H$ must have the column $\sum_jP_{ij}$ which is equal to the transposed row $\sum_iP_{ij}$ .

\end {enumerate}

Let us prove the following theorem.

\begin {theorem}
The chromatic number of the abstract graph $R_k$ of the planar conjugated triangulation $H$ always is equal to: $\gamma(R_k)=3$.
\end {theorem}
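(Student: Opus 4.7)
The plan is to extract enough numerical constraints from the five defining conditions on $R_k$ to pin down its order, and then to read off the chromatic number from the triangle condition. First I would record the two sizes: by condition~(1), $n:=|V(R_k)|=\gamma(H)$, and by condition~(2), $m:=|E(R_k)|=\chi(H)$. Shannon's theorem applied to $H$, whose maximum degree satisfies $\rho_h(H)\leq 4$, gives $m\leq 6$. The construction also stipulates that every vertex of $R_k$ has even degree not less than the maximal degree of the vertices of $H$, so every vertex of $R_k$ has degree at least $4$. The handshaking lemma then yields
\[
4n \;\leq\; \sum_{v\in V(R_k)} \rho(v) \;=\; 2m \;\leq\; 12,
\]
so $n\leq 3$.

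Next I would exploit condition~(4): since $R_k$ must contain a cycle of length $3$, its vertex set has at least three distinct members, i.e.\ $n\geq 3$. Combining the two bounds gives $n=3$; feeding this back into the handshake inequality forces $m=6$ and each vertex of $R_k$ to have degree exactly $4$. This is consistent with condition~(3) (an Euler circuit exists because the graph is connected and all degrees are even) and with condition~(5) (in-degree equals out-degree at each vertex once the Euler circuit is oriented). The chromatic number then drops out: the triangle inside $R_k$ gives $\gamma(R_k)\geq 3$, while assigning each of the three vertices its own colour gives $\gamma(R_k)\leq 3$, whence $\gamma(R_k)=3$.

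The main obstacle I expect will not be this counting argument but justifying that the five defining properties of $R_k$ really can be imposed simultaneously on every planar conjugated triangulation $H$---in particular, that the lower bound of $4$ on each $R_k$-vertex degree is genuinely forced by the way colour classes meet in $H$, and that the triangle required by condition~(4) is truly present rather than merely assumed. Without these checks the deduction above risks being a statement about a hypothetical graph $R_k$ that may fail to exist for some $H$, so the hard part of a fully rigorous argument will be establishing the realisability of $R_k$ before counting within it.
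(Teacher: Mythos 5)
Your counting argument is correct within the paper's framework and reaches the same conclusion, but it is noticeably tighter in execution than the paper's proof. The paper's primary derivation is the matrix version of your handshake computation: it argues that every row sum and every column sum of the adjacency matrix $P_R$ must equal $2$ (out-degree and in-degree both $2$, i.e.\ total degree $4$ at every vertex) while $\sum_i\sum_j P_{ij}=6$ by Shannon, so the matrix must be $3\times 3$; it then spends the bulk of the proof on a confirmatory enumeration of $R_k$ for $k=1,2,3,4$, including five separate subcases for $k=4$, ruling each out by inspecting Euler circuits, triangles, and degree sequences. Your single inequality $4n\le\sum_v\rho(v)=2m\le 12$, combined with the triangle condition forcing $n\ge 3$, replaces all of that enumeration and makes explicit the lower bound that the paper only uses implicitly when it discards $k=1$ and $k=2$. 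What the paper's longer route buys is some reassurance that no exotic $4$-vertex configuration slips through (e.g.\ the asymmetric-matrix discussion in the following section), but logically your two-sided bound already covers those cases. Most importantly, you correctly identify where the real difficulty lies: both your argument and the paper's rest entirely on the five properties of $R_k$ being simultaneously realisable for every planar conjugated triangulation $H$ --- in particular the degree lower bound of $4$ (which the paper justifies only by a vague remark about external vertices) and the presence of a $3$-cycle. The paper never establishes this realisability, so the theorem as proved there is, exactly as you say, a statement about a postulated abstract graph rather than about $H$ itself; your closing paragraph is the honest assessment of what remains to be done.
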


\begin {proof}
For the proof of the theorem it is sufficient to show that graph $R_k$ possesses the listed above characteristics only in one case --- if the number $k$ of its vertexes is equal to $k=3$. Let us examine the graphs $R_k$ at $k=1,2,3,4$ and two cases of matrix $P_R$:

\begin {itemize}
\item
matrix $P_R$ --- is symmetric one;
\item
matrix $P_R$ --- is antisymmetric one.
\end {itemize}

Under Brooks' theorem the graph's $H$ chromatic number is not more than four, thus it makes no sense to examine graphs $R_k$ at $k>4$.

Let us examine and prove the properties of the $P_R$ matrix, which is the adjacency matrix of graph's $R$ vertexes. Under Shannon's theorem: $\chi(R_k)\leq6$. 

So, accepting all the non-nil elements of matrix $P_R$ as units, we will have: $\sum_i\sum_j{P_{ij}}\leq6$. Matrix $P_R$ must not possess empty rows and columns, so either the row $(\sum_iP_{ij})_j$ or the column $(\sum_jP_{ij})_i$ must not possess nulls.

Hence the coloring of graph's $H$ external vertexes must not provoke any difficulties at the requirement of the parallel coloring of the inner vertexes; therefore these vertexes may enter any graph's $R$ vertexes, which correspond to the submatrixes of the one-color inner vertexes of graph $H$. 

Thus, none of the graph's vertexes can have the degree less than 4. So, both the row $(\sum_iP_{ij})_j$ and the column $(\sum_jP_{ij})_i$ must possess only twains. 

So, $(\sum_iP_{ij})_j=2; 2; 2;…$ and $(\sum_jP_{ij})_i=2;2;2;…$ such as for graph $R_k$ thus for graph $H$. 

Under Shannon's theorem for the graph $R_k$ the condition: 
$$\sum_i\sum_jP_{ij}=\sum_j\sum_iP_{ij}= 6$$
is always true. 

So, $(\sum_iP_{ij})=2;2;2$ and $(\sum_jP_{ij})=2;2;2$. Only the single matrix $P_R$ meets such a requirement (Fig. \ref {Fig:3}).

\begin{figure}[htb]
		\includegraphics[width=0.9\textwidth]{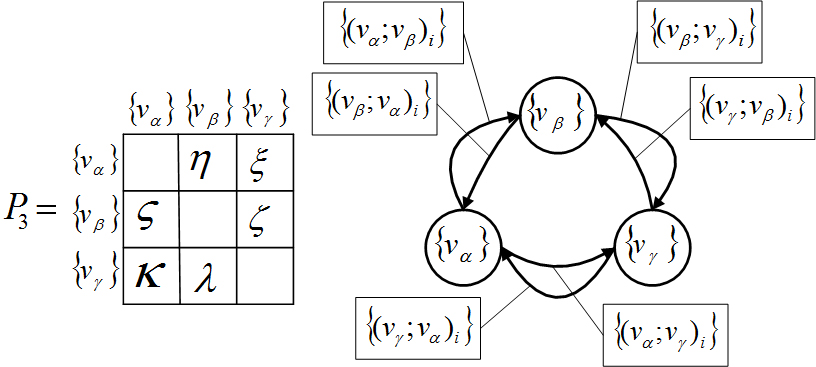}
	\caption{}
	\label{Fig:3}
\end{figure}

This implies that graph's $R$ chromatic number is always equal to 3. Let us confirm the obtained result and \textit{pro hac vice} let us examine the abstract graph $R_k$ at $k=1;2;3;4$.

\textbf{Case 1}. $k=1$. In this case we have null-graph $R_0$, which cannot serve as the abstract mapping of graph $H$.

\textbf{Case 2}. $k=2$. Graph $R_2$ and its matrix are represented in Fig.\ref {Fig:4}.
For graph $R_2$ we have $\rho(R_2 )=2$; $\gamma(R_2 )=2$; $\chi(R_2 )=(3/2)*2=3$. Graph $R_2$ permits the constructing of Euler circuit. Matrix $R_2$ --- is symmetric, so $(\sum_iP_{ij}) _j=(\sum_jP_{ij})_i^T=(1;1)$. But graph $R_2$ cannot serve as the abstract mapping of graph $H$, because it does not permit the construction of the cycles of the length 3.

\begin{figure}[htb]
		\includegraphics[width=0.53\textwidth]{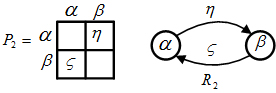}
	\caption{}
	\label{Fig:4}
\end{figure}

\textbf{Case 3}. $k=3$. Graph $R_3$, which is presented in Fig. \ref {Fig:5} with its matrix has 3 vertexes and 6 edges. We have: $\rho(R_3 )=4$; $\gamma(R_3 )=3$; $\chi(R_3)=(3/2)*4=6$.

Graph $R_3$ has a symmetric matrix $P_3$ (Fig. \ref {Fig:5}), so: $(\sum_iP_{ij})_j=(\sum_jP_{ij})_i^T=(2;2;2)$. The graph permits the construction of both Euler circuit and minimal cycles of the length 3. Thus, graph $R_3$ meets the requirements of the theorem.

\begin{figure}[htb]
		\includegraphics[width=0.53\textwidth]{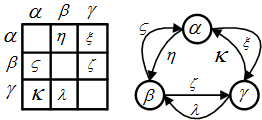}
	\caption{}
	\label{Fig:5}
\end{figure}

\textbf{Case 4}. We have five different graphs $R_k$ with the symmetric matrixes $P_k$ and with the different number of the edges. Let us examine them step by step.

\begin {enumerate}

\item
Graph $R_4^1$ is presented in Fig.\ref{Fig:6}. For this graph we have: $\rho(R_4^1)=6$, but $\rho(H)=4$; $\gamma(R_4^1)=4$ according to the construction; $\chi(R_4^1)=\chi(H)=3/2*4=6$. Its matrix $P_4$ is symmetric, thus $(\sum_iP_{ij})_j=(\sum_jP_{ij})_i^T=(1;3;1;1)$ and $\sum_i\sum_jP_{ij}=6$. 

\begin{figure}[htb]
		\includegraphics[width=0.65\textwidth]{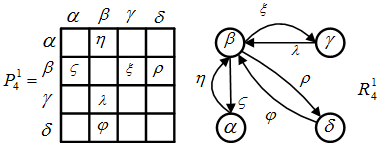}
	\caption{}
	\label{Fig:6}
\end{figure}

The graph permits the creation of Euler circuit. But it does not permit the creation of the minimal cycles of the length 3, so it cannot serve as the mapping of graph $H$.

\item
Graph $R_4^2$ is presented in Fig.\ref{Fig:7}. We have: $\rho(R_4^2)=4$; $\gamma(R_4^2)=4$; $chi(R_4^2)=6$; $(\sum_iP_{ij})_j=(\sum_jP_{ij})_i^T=(1;2;2;1)$; $\sum_i\sum_jP_{ij}=6$. The graph permits the creation of Euler circuit. But it does not permit the creation of the minimal cycles of the length 3. So, graph $R_4^2$ also is excluded and will not be examined.

Let us examine the graphs with the number of the edges more than 6.

\begin{figure}[htb]
		\includegraphics[width=0.65\textwidth]{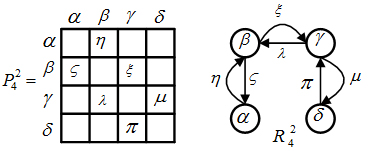}
	\caption{}
	\label{Fig:7}
\end{figure}

\item
Graph $R_4^3$ is presented in Fig. \ref {Fig:8}. We have: $\rho(R_4^3)=6$, but $\rho(H)=4$; $\gamma(R_4^3)=4$; $\chi(R_4^3)=8$ --- according to the construction; $(\sum_iP_{ij})_j=(\sum_jP_{ij})_i^T=(3;2;3;2)$ and $\sum_i\sum_jP_{ij}=10$. The graph permits both the creation of Euler circuit and the creation of the minimal cycles of the length 3. 

\begin{figure}[htb]
		\includegraphics[width=0.65\textwidth]{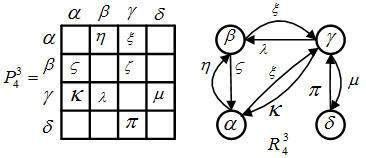}
	\caption{}
	\label{Fig:8}
\end{figure}

But this graph also cannot serve as the mapping of graph $H$ according to the following reasons:

\begin {itemize}

\item	
The vertex $v_\delta$ has $\rho(v_\delta)=2$. Such vertexes in graph $H$ appear to be its external vertexes. But each external vertex is connected with the vertexes of different color by the edges. Thus $v_\delta$ does not belong to a subset of the external vertexes, and so it cannot be the mapping of graph's $H$ vertexes. At the exclusion of the vertex $v_\delta$ together with the edges $\pi$ and $\mu$ graph $R_4^3$ turns into graph $R_3$.

\item
Hence for graph $H$ we have $\rho(H)=4$, then $\chi(H)=6$. Therefore, either graph $R_4^3$ has the wrong edge coloring or at least one vertex $v_\gamma$ has the degree 6. According to both cases graph $R_4^3$ cannot serve as the mapping of graph $H$.

\end {itemize}

\item
Graph $R_4^4$ is presented in Fig. \ref{Fig:9}. We have: $\rho(R_4^4)=6$, but $\gamma(H)=4$; $\gamma(R_4^4)=4$; $\chi(R_4^4)=10$ (according to the construction); $(\sum_iP_{ij})_j=(\sum_jP_{ij})_i^T=(3;2;3;2)$; $\sum_i\sum_jP_{ij}=10$. The graph permits the generation of both Euler circuits and the minimal cycles of the length 3. But the vertex $v_\delta$ may have the equal color with the vertex $v_\beta$, because they are not the adjacent ones. So, due to the accepted method of the edge's coloring: $\mu\equiv\lambda$; $\pi\equiv\epsilon$; $\psi\equiv\nu$; $\omega\equiv\xi$. But in this case graph $R_4^4$ is also transformed into graph $R_3$.

\begin{figure}[htb]
		\includegraphics[width=0.75\textwidth]{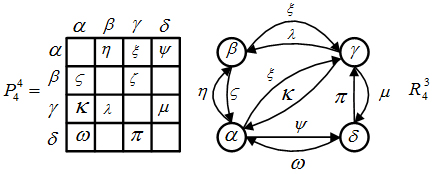}
	\caption{}
	\label{Fig:9}
\end{figure}

\item
Graph $R_4^5$ is presented in Fig. \ref {Fig:10}. We have: $\rho(R_4^5)=6$, but $\rho(H)=4$; $\gamma(R_4^5)=4$; $\chi(R_4^5)=12$ according to the construction; $(\sum_iP_{ij})_j=(\sum_jP_{ij})_i^T=(3;3;3;3)$. 

\begin{figure}[htb]
		\includegraphics[width=0.75\textwidth]{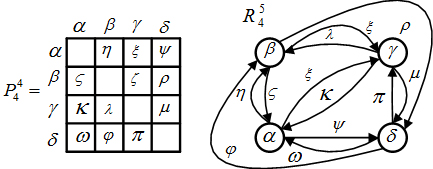}
	\caption{}
	\label{Fig:10}
\end{figure}

The graph permits the creation of both Euler circuits and the minimal cycles of the length 3. But it also cannot serve as graph's $H$ mapping according to the following reasons:

\begin {itemize}
\item
Even if we are able to accept $\rho(P_4^5)=6$, nevertheless it might be: $\chi(P_4^5)=3/2 \rho(P_4^5)=9<12$, and thus the painting of graph $P_4^5$ is wrong. It is a fortiori more wrong in the case of $\rho(H)=4$.
\item
Graph $P_4^5$ permits the presence in graph H vertexes with the degree 6, which is not possible.
\end {itemize}

\end {enumerate}

\end {proof}

Let us examine the possibility of painting the vertexes of graph $R$ in four colors at the condition of painting the edges in 6 colors.

\section {The opportunities for graph's $H$ vertex coloring}

Let us examine the problem of the possibilities of the vertex's coloring in four colors of graph $R$ at the condition of its edges' coloring in six colors. The ordering of vertexes colors will be done along Euler circuit.

Matrix $P_4$ has 6 cells above its diagonal. There exists 20 ways of the distribution of three units in six cells: $C_6^3=6!/3!(6-3)!=20$. And of course there are 20 variants of the symmetric matrixes $P_4$. Five graph $R_4$ modes correspond to these 20 variants; each graph $R_4$ can be oriented in 4 directions (Rig. \ref{Fig:11} and \ref{Fig:12}). Only four graphs (7, 10, 12 and 20) out of 20 can serve as a mapping of graph $H$. All of them are isomorphic with graph $R_3$.

\begin{figure}[htb]
		\includegraphics[width=0.75\textwidth]{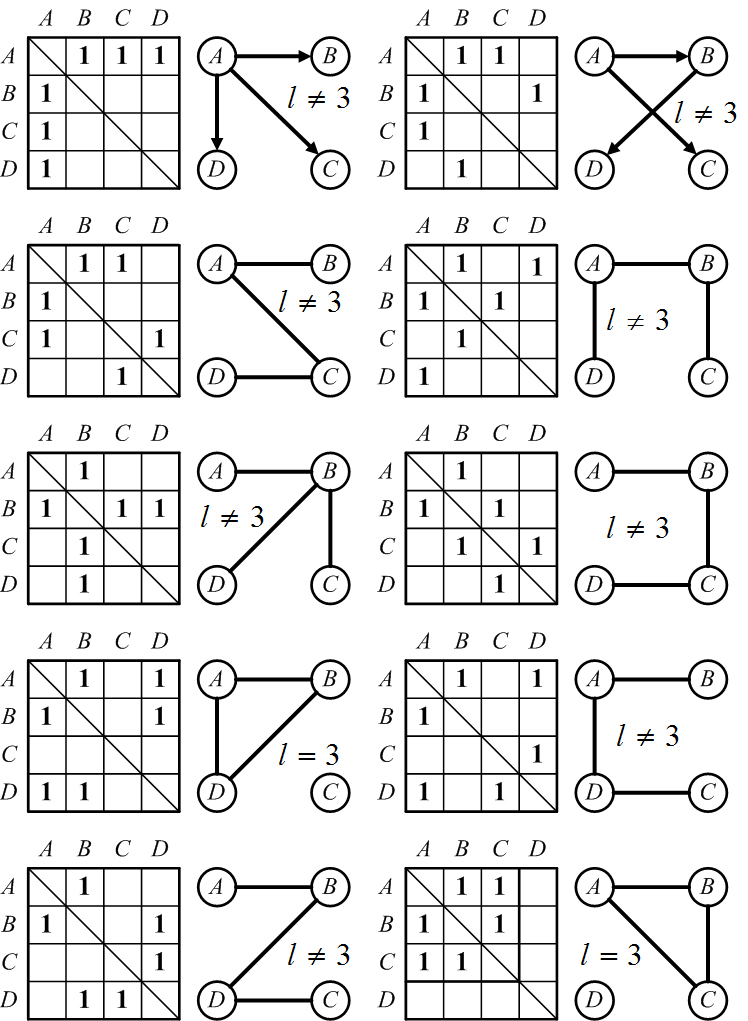}
	\caption{}
	\label{Fig:11}
\end{figure}

\begin{figure}[htb]
		\includegraphics[width=0.75\textwidth]{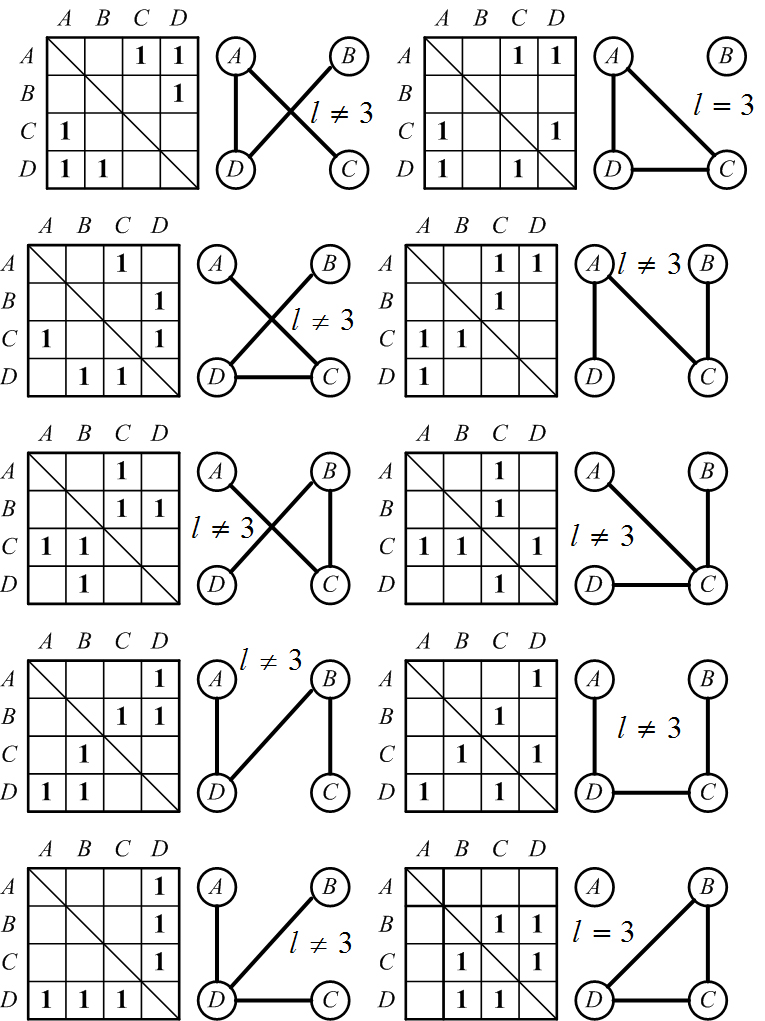}
	\caption{}
	\label{Fig:12}
\end{figure}

But a logical question arises: may some graph $R_4$ exist with the anti-symmetric adjacency matrix, which has $\sum_i\sum_jP_{ij}=6$, $\gamma(R_4)=4$ and which can serve as graph's $H$ mapping?

For the final answer we ought to examine 190 variants of the asymmetrical matrixes $P_4$ and the same amount of graphs $R_4$. But everything might be simpler. Hence it must be Euler circuit in graph $H$ and a number of the vertexes have two exits and two entries, then the row $(\sum_iP_{ij})_j$ and the column $(\sum_jP_{ij})_i$ can contain only numbers 1 and 2 at the condition (requirement) that their sum is equal to 6.

Then there exist $C_4^2$ variants of the rows $(\sum_iP_{ij})_j$, notably: $C_4^2=(2\cdot3\cdot4)/(2\cdot2)=6$ and the same amount of the column's variants $(\sum_jP_{ij})_i$.

As a result, there can exist not more than 21 variants of such matrixes $P_4$, for which both the row $(\sum_iP_{ij})_j$ and the column $(\sum_jP_{ij})_i$ are composed out of both the twins and the units.

For the aim that at least one of these variants may serve as the adjacency graph's $R_4$ matrix (at the requirement that graph $R_4$ serves as the abstract mapping of graph $H$), it is necessary and sufficient for such variant to have term wise sums of the column $(\sum_iP_{ij})_j$ and the column $(\sum_jP_{ij})_i^T$ to meet one of the three following requirements:

\begin {itemize}
\item
It must contain only the groups of fours.
\item
It must contain three fours and one twin.
\item
It must contain three fours and one zero.
\end {itemize}

And none of these is possible.

Thus, the abstract graph $R$ of the planar conjugated triangulation has the right edges' coloring in six colors only iff its chromatic number is equal to three. So, because the right painting of the graph's R edges with the help of six colors is always possible, then the graph $R$ is 3-chromatic.

\begin {corollary}
The chromatic number of the special minimal graph $H_{min}$ is always equal to three: $\gamma(H)=3$.
\end {corollary}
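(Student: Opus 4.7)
The plan is to read off the corollary directly from the theorem just proved. First I would recall the construction: each vertex $v_\alpha$ of the abstract graph $R_k$ corresponds to the color class $\{v_{\alpha i}\}_H$ in $H$, and each edge of $R_k$ between $v_\alpha$ and $v_\beta$ records the existence of an $H$-edge whose endpoints are colored $\alpha$ and $\beta$. Therefore the number of vertices of any $R_k$ that genuinely serves as an abstract mapping of $H$ equals the number of colors used in a proper coloring of $H$, so $\gamma(H_{\min})$ coincides with the value of $k$ for the admissible $R_k$.

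Second, I would invoke the preceding theorem. Brooks' theorem bounds $\gamma(H_{\min}) \leq 4$, so only $k \in \{1,2,3,4\}$ need to be considered. The theorem showed that $k=1$ gives a null graph, $k=2$ fails the requirement of minimal cycles of length $3$, and all five variants at $k=4$ either violate the degree bound $\rho(H)=4$, fail Shannon's inequality $\chi \leq \tfrac{3}{2}\rho$, or collapse back to $R_3$ after identifying forced equivalent colors. Only $k=3$ simultaneously meets all five characteristics (row/column sums of $P_R$ equal to $(2,2,2)$, admissibility of an Euler circuit, and presence of a 3-cycle). This yields the upper bound $\gamma(H_{\min}) \leq 3$.

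For the matching lower bound I would note that $H_{\min}$, being a (conjugated) triangulation, contains a triangle $K_3$, which forces $\gamma(H_{\min}) \geq 3$. Combining the two bounds gives $\gamma(H_{\min}) = 3$, which is exactly the statement of the corollary.

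The step I expect to be the main obstacle is the reverse direction: turning the abstract admissibility of $R_3$ into an honest 3-coloring of the concrete graph $H_{\min}$. The theorem rules out every $R_k$ with $k\neq 3$ on abstract grounds (row sums, Euler circuit, triangle condition), but one must still verify that the surviving $R_3$ is realized by an actual assignment of colors to the vertices of $H_{\min}$ consistent with the 6-color edge coloring guaranteed by Shannon's theorem and Hypothesis~11, rather than leaving open the possibility that no $R_k$ at all is admissible for a particular $H_{\min}$. Once this realization is confirmed, the corollary follows without further calculation.
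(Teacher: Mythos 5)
Your proposal takes essentially the same route as the paper: the corollary is read off the preceding theorem on the abstract graph $R_k$ (only $k=3$ survives the case analysis over $k=1,2,3,4$), with your $K_3$ lower bound standing in for the $\gamma(H)\geq 3$ bound that the paper derives separately from Shannon's theorem in the following section. The obstacle you flag at the end --- that abstract admissibility of $R_3$ must still be realized by an actual 3-coloring of the concrete graph $H_{\min}$ consistent with the 6-color edge coloring --- is genuine, but it is equally present in the paper, which supplies no proof of the corollary beyond the $R_k$ theorem, concedes immediately afterward that the transfer to the general planar triangulation ``is not obvious,'' and defers the construction of the required coloring operators to later work; so your reconstruction matches the paper's argument, including its unresolved step.
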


It seems that hereof with the sufficient clearness follows the equity of the Four Color Problem, which now maybe will gain the power of a theorem: \textit{Any planar graph is not more than four-chromatic.}

But can we transfer this statement, also with the sufficient clearness to the general case of the planar triangulation? 

Unfortunately, such statement is not obvious.

\section {The lower estimation of the chromatic number}

Let us specify the low estimation of the chromatic number for the planar conjugated triangulation with the help of the Shannon's theorem.

\begin {theorem}
The chromatic number of any planar conjugated triangulation is situated in the interval: $4\geq(H)\geq3$.
\end {theorem}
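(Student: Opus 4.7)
The plan is to prove the two inequalities $\gamma(H)\leq 4$ and $\gamma(H)\geq 3$ separately, using results already recorded in the excerpt.

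For the upper bound, the work is essentially already done. A planar conjugated triangulation has $\rho_h(H)\leq 4$, and no connected component can coincide with the complete $5$-vertex graph $F_5=K_5$ because $K_5$ is not planar. Hence the hypotheses of the version of Brooks' theorem formulated in Section~2 are satisfied, and Brooks' theorem immediately yields $\gamma(H)\leq 4$. This half is routine; I would simply cite the earlier formulation.

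For the lower bound I would argue via Shannon's theorem together with the pair-of-colors interpretation of edges developed in Section~3. By Shannon's theorem, $\chi(H)\leq \lfloor (3/2)\rho_h(H)\rfloor \leq 6$, and in fact $\chi(H)\geq \rho_h(H)$ since $\chi$ is bounded below by the maximum degree; for an interior vertex of the triangulation this gives $\chi(H)\geq 3$ (otherwise a single pair of colors would have to cover several edges meeting at the same vertex, contradicting properness of an edge coloring). Each edge of a properly vertex-coloured $H$ inherits as its ``colour'' the unordered pair of colours at its endpoints, so at most $\binom{\gamma(H)}{2}$ distinct such edge-labels occur; since these labels give a proper edge coloring of $H$, one obtains $\binom{\gamma(H)}{2}\geq \chi(H)\geq 3$, forcing $\gamma(H)\geq 3$. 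Equivalently, and more cleanly: a conjugated triangulation contains $3$-cycles by definition, and every proper vertex coloring must use three different colours on a triangle, so $\gamma(H)\geq 3$ directly.

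Combining the two bounds gives $3\leq \gamma(H)\leq 4$, which is exactly the asserted interval. The only point that requires any care is the lower bound step: the ``$\chi(H)\geq \rho_h(H)$'' inequality has to be invoked (or replaced by the triangle argument) to rule out $\gamma(H)\leq 2$, since Shannon's theorem alone is an upper bound on $\chi(H)$ and gives no immediate lower bound on $\gamma(H)$. I expect this translation from an upper bound on the chromatic class to a lower bound on the chromatic number to be the main pedagogical obstacle, and I would state it explicitly rather than relying on the reader to fill it in.
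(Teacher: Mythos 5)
Your proposal is correct in substance, but it reaches the lower bound by a genuinely different (and sounder) route than the paper. The paper's two proofs both argue as follows: orient the edges along an Euler circuit, invoke Shannon's theorem to get $\chi(H)\leq 6$, observe that six ordered pairs can be formed from three symbols, and conclude $\gamma(H)\geq 3$. As you correctly diagnose in your final paragraph, this inference is incomplete: Shannon's theorem is only an upper bound on the chromatic class, and an upper bound on $\chi(H)$ by itself cannot force a lower bound on $\gamma(H)$ --- if the graph happened to admit a proper vertex coloring with two colors, the induced pairs would simply use fewer than six labels, with no contradiction. Your triangle argument (a conjugated triangulation contains $3$-cycles, and a proper coloring must use three distinct colors on any triangle) supplies exactly the missing step and is the cleanest way to close the gap; the paper never makes this move explicitly. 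The upper bound via Brooks' theorem matches what the paper relies on from Section~2.

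One caution about your first variant of the lower bound: the claim that the endpoint-color pairs of a properly vertex-colored graph ``give a proper edge coloring'' is false in general. A vertex colored $\alpha$ with two neighbors both colored $\beta$ produces two adjacent edges carrying the same label $\{\alpha,\beta\}$, so the inequality $\binom{\gamma(H)}{2}\geq\chi(H)$ fails already for a star $K_{1,n}$ (where $\gamma=2$ but $\chi=n$). This is, incidentally, the same tacit assumption that underlies the paper's own pair-counting argument. Your second, triangle-based variant does not suffer from this problem and should be taken as the actual proof of $\gamma(H)\geq 3$; with that substitution the argument is complete and rigorous.
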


\begin {proof}
Euler circuit exists in graph $H(V,Q)$ because the degrees of all its vertexes $v_h$ are even. Let us suggest that all the edges of graph' $H(V,Q)$ are oriented in the direction of the bypass of Euler circuit. From this moment we can examine the vertexes $v_h$ and $v_g$, which assign the beginning and the ending of the arc $q_i$ as the components of some directed pair. Then: $(v_h;v_g)\neq(v_g;v_h)$.

Let us suggest that we have three arbitrary symbols: $\alpha$, $\beta$ and $\gamma$, from which any arbitrary directed pairs can be composed. A number of such pairs is equal to a number of the distributions at two from three: so, it is equal to 6. Indeed, we have: $\alpha\beta$, $\beta\gamma$, $\gamma\alpha$, $\alpha\gamma$, $\gamma\beta$ and $\beta\alpha$, or just 6 directed pairs. All these directed pairs are composed out of three symbols.

Let us examine the edges of graph $H(V,Q)$ as the directed pairs, which are composed out of the vertexes of graph $H(V,Q)$ and let us estimate the coloring of these edges with the help of Shannon's theorem \cite {Weaver}. Concerning our case, the case of the planar conjugated triangulation, under the Shannon's theorem it follows that its chromatic class is not more than six. Indeed, as $\rho_h\leq4$, then $3/2 \rho_h\leq6$.

It hence follows that $\gamma(H)\geq3$. The theorem is proved.

So, the chromatic number of the planar conjugated triangulation meets the requirement: $4\geq\gamma(H)\geq3$.
\end {proof}

\begin {proof}
It is evident that $\gamma(H)\geq2$. Again each edge of the planar conjugated triangulation can be presented as the directed pair of the vertexes. If each new pair is examined as a new color, then under the Shannon's theorem the whole set of graph's $H$ edges must be composed of not more than six subsets of the directed vertexes' pairs. The minimal number of the vertexes' subsets, differing by colors, from which we can compose six subsets of the different pairs, must be equal to 3.

Indeed, out of three subsets ${\alpha}$; ${\beta}$; ${\gamma}$ we can compose 6 subsets of the pairs of these elements: ${\alpha\beta}$; ${\alpha\gamma}$; ${\beta\gamma}$; ${\beta\alpha}$; ${\gamma\alpha}$; ${\gamma\beta}$. It follows that: $\gamma(H)\geq3$. The theorem is proved.
\end {proof}

So, the chromatic number of the planar conjugated triangulation meets the requirement: $4\geq\gamma(H)\geq3$.

But as the six colors always can be presented as 6 distributions from 3 by 2, thus the theorem install the more strict condition (requirement) for the planar conjugated triangulation, than Brook's theorem.

Indeed, if the chromatic number of the planar conjugated triangulation is not more then 4 (under Brook's theorem) then, examining each edge as the ordered pair of the vertexes, being colored into the color, which corresponds to one distribution of 2 by 4, we'll come to the conclusion that a number of such distributions may be significantly more, than it is defined by the upper bound $\chi(H)$, which is declared by Shannon's theorem.

\begin {theorem}
Let graph $G(Q,\Gamma)$ be the adjacency graph of the edges of the planar conjugated triangulation $H(V,Q)$. Then the absolute degrees of graph's $G(Q,\Gamma)$ vertexes are always even and are not more than 6.
\end{theorem}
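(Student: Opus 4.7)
The plan is to compute the degree of an arbitrary vertex $q \in G(Q,\Gamma)$ directly from the definition of $G$ as the adjacency (line) graph of $H(V,Q)$, and then to read off both evenness and the upper bound $6$ from the known properties of $H$ already established earlier in the paper.

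First, I would fix a vertex $q \in G$. By construction $q$ corresponds to an edge $e = (u,v) \in H$, and two vertices of $G$ are joined iff the corresponding edges of $H$ share an endpoint. Hence the neighbours of $q$ in $G$ are in one-to-one correspondence with the edges of $H$ incident to $u$ other than $e$, together with the edges of $H$ incident to $v$ other than $e$. Since the planar conjugated triangulation $H$ has no loops and no multiple edges, these two sets are disjoint (an edge in the first set ends at a neighbour of $u$ distinct from $v$, while an edge in the second set ends at a neighbour of $v$ distinct from $u$, and they cannot coincide without producing a multi-edge). Consequently
\begin{equation*}
\rho_G(q) \;=\; \bigl(\rho_H(u)-1\bigr) + \bigl(\rho_H(v)-1\bigr) \;=\; \rho_H(u) + \rho_H(v) - 2.
\end{equation*}

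Next I would invoke the two facts about $H$ already used elsewhere in the paper: every vertex of the planar conjugated triangulation has even degree (this is precisely the property that guarantees the existence of an Euler circuit in $H$, as exploited in Section~5), and every such degree is at most $4$. Substituting these into the formula above gives $\rho_G(q) = \rho_H(u) + \rho_H(v) - 2$, which is a sum of two even numbers minus $2$, hence even; and is bounded by $4 + 4 - 2 = 6$. Since $q$ was arbitrary, every vertex of $G$ has even degree and $\rho_G(q)\leq 6$, which is the desired conclusion.

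The main obstacle, such as it is, lies not in the arithmetic but in the careful bookkeeping of the case analysis: one must verify that the formula $\rho_G(q) = \rho_H(u)+\rho_H(v)-2$ remains valid for boundary edges (where $u$ or $v$ may have degree $2$ rather than $4$) and to confirm that in each situation no edge adjacent to both $u$ and $v$ is counted twice. Once the absence of multi-edges and loops in $H$ is recorded, this reduces to the single identity above, and both assertions of the theorem follow at once.
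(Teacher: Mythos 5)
Your proposal is correct and follows essentially the same route as the paper: the paper likewise observes that each end of an edge of $H$ contributes $\rho_H(\cdot)-1$ (i.e.\ $1$ or $3$) further edges, so a vertex of $G$ has degree $2$, $4$, or $6$, which is exactly your formula $\rho_G(q)=\rho_H(u)+\rho_H(v)-2$ specialized to degrees $2$ and $4$. Your explicit remark that the two contributing edge sets are disjoint (no multi-edges) is a small point the paper leaves implicit, but the argument is the same.
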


\begin {proof}
Hence the vertexes' $v_h$ degrees of graph $H(V,Q)$ have the value 2 or 4, and then to every edge in graph $H$ may be adjacent on each end 1 or 3 edges (not more). So, every edge of graph $H$ may have on its both ends 2, 4 or 6 adjacent edges, but not more. So, to each vertex $q_i$ in $G$ also may be incident 2, 4 or 6 edges. That is each graph's $G(Q,\Gamma)$ vertex can have the degree 2, 4 or 6, and not more. The theorem is proved .
\end {proof}

\begin {theorem}
The chromatic number of the graph $G(Q,\Gamma)$ is not more than six.
\end {theorem}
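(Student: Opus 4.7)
The plan is to reduce the vertex coloring of $G$ directly to the edge coloring of $H$, for which Shannon's theorem has already been invoked in the excerpt. By construction, the vertices of $G(Q,\Gamma)$ are the edges of $H(V,Q)$, and two vertices of $G$ are adjacent exactly when the corresponding edges of $H$ share an endpoint. Hence $G$ is precisely the line graph of $H$, and a proper vertex coloring of $G$ is the same object as a proper edge coloring of $H$. This identification gives $\gamma(G)=\chi(H)$.

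First I would state this correspondence explicitly, referencing the construction rules (1)--(2) that were used earlier to define $G$ from $H$. This is a routine observation but it is the conceptual heart of the argument: it converts a question about vertex chromatic number into a question about the chromatic class that was already treated by Shannon's theorem.

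Next I would invoke Shannon's theorem, which is stated in the excerpt: for a loopless graph one has $\chi(H)\leq \lfloor \tfrac{3}{2}\rho_h(H)\rfloor$. For the planar conjugated triangulation the vertex degrees satisfy $\rho_h(H)\leq 4$, and this bound has already been applied in the paper to conclude $\chi(H)\leq 6$. Combining this with $\gamma(G)=\chi(H)$ yields $\gamma(G)\leq 6$, which is exactly the claim.

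I do not expect a substantial obstacle here; the only thing that needs a line of justification is that $G$ has no loops, so that Shannon's theorem applies to $H$ in the stated form, and that the correspondence between edge colorings of $H$ and vertex colorings of its line graph $G$ really does preserve the coloring constraint. Both are immediate from the construction. As a sanity check, one can also verify the bound via Brooks' theorem applied to $G$ itself, using the previous theorem that $\rho(G)\leq 6$; this gives $\gamma(G)\leq 6$ provided no component of $G$ is $K_{7}$, which is automatic since a line graph $L(H)$ of a graph with $\rho_h(H)\leq 4$ cannot contain $K_{7}$ as a component. This alternative route serves only as a confirmation; the main proof is the one-line reduction to Shannon's theorem.
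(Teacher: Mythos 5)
Your proposal is correct and follows essentially the same route as the paper: identify $G$ as the line graph of $H$, so that a proper vertex coloring of $G$ is a proper edge coloring of $H$, and then apply Shannon's bound $\chi(H)\leq\frac{3}{2}\rho_h(H)\leq 6$; the paper's own one-line proof cites exactly Shannon's theorem together with the preceding degree theorem. Your secondary check via Brooks' theorem and $\rho(G)\leq 6$ also mirrors the paper's later restatement of this result, so nothing further is needed.
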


\begin {proof}
The equity of the theorem follows from Shannon's theorem and theorem above.
\end {proof}

Let us formulate the following theorem.

\begin {theorem}
Let $H(V,Q)$ --- be a planar conjugated triangulation, and its chromatic class is $\chi(H)\leq6$. Then its chromatic number is $\gamma(H)\leq3$.
\end {theorem}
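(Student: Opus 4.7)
The plan is to upgrade the 6-edge-coloring guaranteed by $\chi(H)\leq 6$ into a 3-vertex-coloring by identifying the six edge-colors with the six directed pairs $\alpha\beta,\beta\gamma,\gamma\alpha,\alpha\gamma,\gamma\beta,\beta\alpha$ drawn from a three-letter alphabet $\{\alpha,\beta,\gamma\}$, and then reading off the vertex label at each end of each edge from its pair. Since every vertex of a planar conjugated triangulation has even degree, $H$ carries an Euler circuit; I would first fix one such circuit and orient every edge in the direction of traversal, so that each edge becomes a directed pair $(v_h,v_g)$ and its color becomes an ordered pair of labels with a well-defined first and second component.

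Next, I would define a tentative vertex coloring $c\colon V\to\{\alpha,\beta,\gamma\}$ by setting $c(v)$ equal to the first component of any edge leaving $v$ under this orientation, which should equal the second component of any edge entering $v$. At every inner vertex the Euler orientation produces exactly two outgoing and two incoming arcs, so the four incident edges receive four distinct colors drawn from the six directed pairs. The classification of the abstract graph $R_k$ carried out in the preceding theorem already forces the only admissible pattern to be the one realized by $R_3$, which means that at every vertex $v$ the two outgoing arcs share a common first entry and the two incoming arcs share that same entry as their second component. This makes $c$ well-defined at each $v$.

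Properness is then automatic: if an edge $q=(v_h,v_g)$ has color $(\mu,\nu)$, then by construction $c(v_h)=\mu$ and $c(v_g)=\nu$, and $\mu\neq\nu$ since the six directed pairs all have distinct first and second components. Hence adjacent vertices receive distinct labels, the map $c$ uses at most the three symbols $\alpha,\beta,\gamma$, and therefore $\gamma(H)\leq 3$, which is precisely the claim.

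The main obstacle is the middle step: it is not obvious a priori that an arbitrary 6-edge-coloring supplied by Shannon's theorem can be \emph{re-labeled} by the six ordered pairs on three symbols in a way that is locally consistent at every vertex of $H$. I would handle this by appealing to the structural classification of the abstract mapping $R_k$ already established (the cases $k=1,2,3,4$ worked through in the preceding theorem), which rules out every pattern except the one isomorphic to $R_3$, together with the bookkeeping argument showing that the term-wise sums of $(\sum_i P_{ij})_j$ and $(\sum_j P_{ij})_i^{T}$ cannot realise any of the admissible four-label distributions. Carrying this reduction down from the abstract graph $R$ to each individual vertex of $H$ is where the real work lies; once it is in place, the construction of $c$ and the verification of properness are mechanical.
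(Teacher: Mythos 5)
Your proposal reproduces the paper's intended strategy (pass to the six ordered pairs over $\{\alpha,\beta,\gamma\}$ and read vertex labels off the edge colors), but the step you yourself flag as "the main obstacle" is not a technical detail to be handled later --- it is the entire content of the theorem, and neither your appeal to the $R_k$ classification nor anything else in the paper closes it. The difficulty is concrete: Shannon's theorem hands you an arbitrary proper 6-edge-coloring, and your only freedom is to choose which of the six ordered pairs names which color class, i.e.\ one global bijection. Local consistency at a vertex $v$ requires that the two outgoing arcs at $v$ carry pairs with the same first component and the two incoming arcs carry pairs with that same second component; but a proper edge coloring only guarantees the four incident colors are \emph{distinct}, and distinct pairs such as $(\alpha,\beta)$ and $(\gamma,\delta)$-style conflicts (here $(\alpha,\beta)$ and $(\gamma,\beta)$ both outgoing) can perfectly well occur at a single vertex. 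No single global relabeling can repair such conflicts at every vertex simultaneously, and you give no mechanism for modifying the edge coloring itself. Your fallback --- the classification of the abstract graphs $R_k$ --- is circular: $R_k$ is defined by taking as vertices the color classes of an \emph{already given} proper vertex coloring of $H$, so its properties presuppose exactly the 3-coloring whose existence you are trying to establish.

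For what it is worth, the paper's own proof environment for this theorem contains no argument either: it constructs the line graph $G$, points to two figures, and closes. The author then explicitly concedes that "from the statement that it will be enough 3 initial colors for the creating of 6 colors, the conclusion \dots does not follow directly," that six ordered pairs can equally be formed from four initial symbols (which is why Brooks' bound of 4 is not improved), and that the required "operators" converting the 6-coloring of $G$ into a 3-coloring of $H$ "will be represented a little later in the following papers." So the theorem is unproven in the source, and your proposal inherits the same gap rather than filling it. Since the statement is asserted to be equivalent to the Four Color Problem, any argument this short should be treated with suspicion; to make progress you would need an actual recoloring/exchange argument (in the spirit of Kempe chains on the edge coloring) showing that the 6-edge-coloring can always be transformed into one that is locally of "ordered-pair type" at every vertex, and no such argument is present here.
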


\begin {proof}
Let us do the proof in the mode of the equality: $\gamma(H)=3$. It will be sufficient for the proof of hypothesis 11, and, therefore for the proof of Four Color Problem.

Let us again construct graph $G(Q,\Gamma)$, which will serve as the edge adjacency graph of the planar conjugated triangulation $H$. The construction will be made in the following way: we'll sign the centers of all graph's $H$ edges with the dots; we'll accept these dots as the vertexes of graph $G$; we'll connect the dots with the help of the edges in such cases, when edges $q_i$, which are appropriate to the vertexes $q_i$ in graph $H$ have the common vertex (Fig. \ref{Fig:1}). A subgraph of the triangulation $H$ is cut out by the dashed lines and is represented in Fig. \ref{Fig:13}. The appropriate subgraph $G$ is presented in Fig. \ref{Fig:15}. 

\begin{figure}[htb]
		\includegraphics[width=0.75\textwidth]{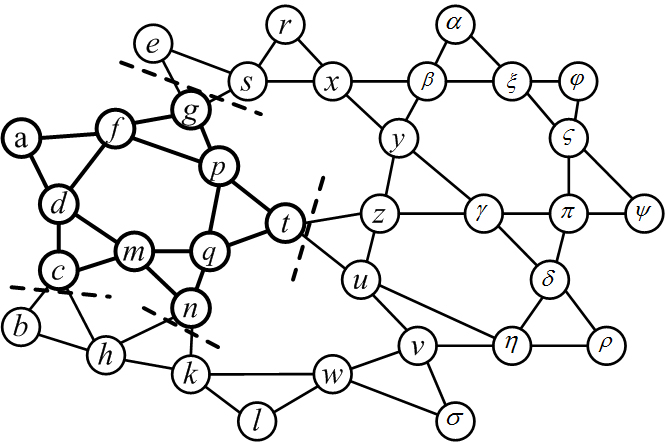}
	\caption{}
	\label{Fig:13}
\end{figure}

\begin{figure}[htb]
		\includegraphics[width=1\textwidth]{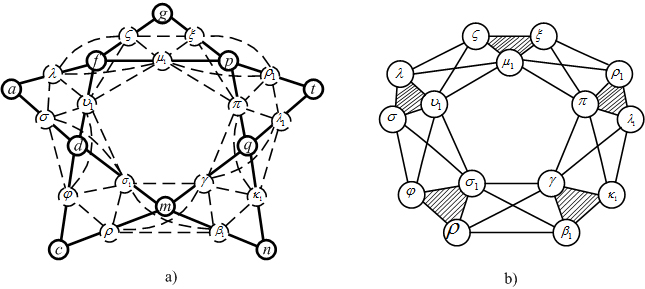}
	\caption{}
	\label{Fig:15}
\end{figure}

The subgraph $H_1$ in Fig.\ref{Fig:15} on the left is presented by thin lines and the subgraph $G_1$ -- by the dashed lines. Fig.\ref{Fig:15} on the right represents only graph $G_1$.

\end {proof}

Let us prove lemma.

\begin {lemma}
Graphs $G$ vertexes' degrees (the adjacency edge graph of the planar conjugated triangulation $H$) are not more than six.
\end {lemma}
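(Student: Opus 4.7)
The plan is a direct edge-counting argument at a single vertex of $G$, relying on the vertex-degree bound for $H$ that has already been established. I would fix an arbitrary vertex $q \in G$, which by construction corresponds to a unique edge $e = (u,v)$ of the planar conjugated triangulation $H$. The neighbours of $q$ in $G$ are, by definition, exactly those vertices of $G$ that correspond to edges of $H$ sharing an endpoint with $e$. Thus $\deg_G(q)$ equals the number of edges of $H$, other than $e$ itself, incident to $u$ or to $v$.

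Next I would invoke the property of $H$ used in the preceding theorem, namely that every vertex of $H$ has even degree and $\rho_h(H)\le 4$. Consequently the number of edges of $H$ incident to $u$ but different from $e$ is at most $\deg_H(u)-1\le 3$, and similarly at $v$ we get at most $\deg_H(v)-1\le 3$. Adding these two contributions yields the bound $\deg_G(q)\le 6$, which is what the lemma asserts.

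The only genuine point to watch is a potential double-counting: an edge $e'\ne e$ would be counted in both contributions only if it shares both endpoints with $e$, i.e. it is a parallel copy of $e$ in $H$. Since $H$ is a planar simple triangulation (conjugated in the sense fixed earlier), parallel edges do not occur, so the two sets of contributing edges are disjoint and the sum is legitimate. If one wanted an even cleaner statement, one could additionally observe that by the parity of $\deg_H(u)$ and $\deg_H(v)$, each summand is odd or even in a controlled way, which also explains why $\deg_G(q)$ is itself always even — matching the companion statement of the preceding theorem.

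I do not anticipate a real obstacle here; the lemma is essentially a restatement of the upper-bound half of the earlier theorem on the degrees in $G$, and the whole argument is a one-line incidence count once the bound $\rho_h(H)\le 4$ is in place. The only thing that needs explicit care in the write-up is clearly distinguishing the roles of $u$ and $v$ so that the reader sees why the two $\le 3$ contributions may simply be added.
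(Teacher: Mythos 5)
Your proposal is correct and is essentially the same incidence count the paper itself uses: each endpoint of the edge $e$ corresponding to $q$ contributes at most $\deg_H - 1 \le 3$ further edges, giving $\deg_G(q)\le 6$. Your explicit remark about excluding double-counting via the absence of parallel edges is a small tidying-up that the paper's proof leaves implicit, but the route is identical.
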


\begin {proof}
As the vertexes' $v_h$ degrees of graph $H(V,Q)$ have the values 2 or 4, then to each edge of $H$ there may be adjacent on one end 1 or 3 edges, and on the other end also 1 or 3 edges. So, each graph's $H$ edge can have at the same time 2, 4 or 6 adjacent edges, but not more. 
\end {proof}

The equity of the next theorem follows from Brooks' theorem and the previous lemma.

\begin {theorem}
The chromatic number of the graph $G$ is not more than six: $\chi(G)\leq6$.
\end {theorem}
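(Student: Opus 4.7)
The plan is to read off the statement directly by applying Brooks' theorem (in the form recorded earlier in this paper) to the graph $G$ itself, with the previous lemma supplying the required degree bound. By that lemma, every vertex of $G$ has degree at most $6$; moreover, $G$ contains vertices of degree at least $3$ (any edge of $H$ at least one of whose endpoints has degree $4$ produces a vertex of $G$ with several neighbors), so the hypothesis $\rho_h(G) \geq 3$ of Brooks' theorem is nonvacuous and we may take $\rho_h = 6$.

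The one genuine obligation is to verify the second hypothesis of Brooks' theorem: that no connected component of $G$ is isomorphic to the complete graph on seven vertices. I would argue this by exploiting the construction of $G$ as the edge-adjacency (line) graph of $H$. A $K_7$-subgraph of $G$ corresponds to a set of seven edges of $H$ that are pairwise adjacent in $H$. By a classical observation of Whitney, any family of $n \geq 5$ pairwise-adjacent edges in a simple graph must all share one common endpoint, i.e.\ form a star $K_{1,n}$. But the triangulation $H$ has maximum vertex degree $4$, so no vertex of $H$ supports seven incident edges, and hence $G$ has no $K_7$ as a subgraph, much less as a component.

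With both hypotheses of Brooks' theorem verified, applying it to $G$ with $\rho_h = 6$ yields the desired bound $\gamma(G) \leq 6$ (written $\chi(G)\leq 6$ in the statement). I expect the verification that $G$ contains no $K_7$ to be the only real obstacle, mainly because the Whitney-type characterization of cliques in line graphs is not recorded explicitly in the paper; I would therefore spell it out as a short self-contained argument (classify pairwise-adjacent edge families of size $\geq 5$ by showing that any two pairwise-adjacent edges sharing different endpoints force a triangle, and that extending to a fifth edge forces a common hub) rather than cite it as folklore.
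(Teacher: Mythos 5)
Your proof is correct and follows the same route as the paper, which likewise derives the bound from Brooks' theorem together with the preceding lemma that every vertex of $G$ has degree at most $6$. The paper, however, treats this as an immediate consequence without checking Brooks' second hypothesis, so your verification that no component of $G$ is a $K_7$ (a clique of size $\geq 4$ in a line graph must come from a star of edges at a single vertex, which is impossible when the degrees in $H$ are at most $4$) supplies a step the paper leaves entirely implicit.
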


Let us return to the theorem on the chromatic number of the planar conjugated triangulation. The proof's complexity of this theorem consists of the fact that six colors of graph's $G$ vertexes can be examined as 6 distributions from 3 by 2. The conclusion, that for the coloring of graph's $H$ vertexes three colors will be always enough, seems naturally. But from the statement that it will be enough 3 initial colors for the creating of 6 colors, the conclusion, that 6 ordered pairs can be combined out of three initial ones, does not follow directly. Six ordered pairs can be composed out of 3, 4 and more initial elements so that in each case any component may be included at least into one pair (Fig.\ref{Fig:16}). Everything depends on the mode of the ordering of those elements, from which the ordered pairs are composed.

\begin{figure}[htb]
		\includegraphics[width=1\textwidth]{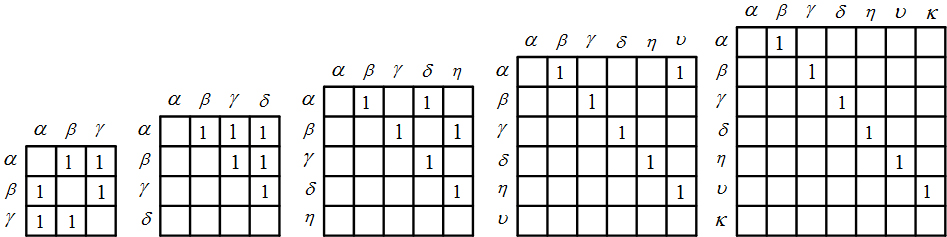}
	\caption{}
	\label{Fig:16}
\end{figure}

For example, in compliance with Brook's theorem, four colors are enough for the right coloring of graph H. Six ordered pairs, composed out of 4 elements, can be easily introduced:

\begin {itemize}

\item
The initial elements: $\alpha$, $\beta$, $\gamma$, $\delta$;
\item
The ordered pairs: $\alpha\beta$, $\alpha\gamma$, $\alpha\delta$, $\beta\gamma$, $\beta\delta$, $\gamma\delta$.

\end {itemize}

Our task is to prove that it is always possible that the vertexes of graph $H$ are to be colored in three colors for the formation of six colors for the vertexes of graph $G$ as the ordered heterochromous (varicolored) ones. At that no adjacent vertexes of graph $H$ will be colored into one color, and no vertexes will remain not colored.

Otherwise, it is necessary to prove a possibility of the existence in graphs $G$ and $H$ of such operators (functional) of the inverse transformation of graph's $H$ vertexes' colors into the colors of graph's $G$ vertexes. It is necessary to prove that for the functional embedding or implementation for the vertexes of graph $G$ to be colored in six colors it is always enough that the vertexes of graph $H$ are to be colored in three colors. It is also necessary to prove that by the proper choice of the operators it is always possible to transform the six-color painting of graph's $G$ vertexes into the three-color painting of graph's $H$ vertexes.

Let us also demand that these operators allowed to transform any pair from six into another five and backwards.

Let us at first examine the problem of such possible methods of the coloring of graph's $G$ vertexes into six colors, starting from the coloring of graph $H$, which will need exactly three colors for the vertexes of graph $H$. Let us show that if graph's $H$ vertexes can be colored in three colors, then there exist exactly 16 variants of coloring of any six vertexes, which are the adjacent ones with the given vertex, which is colored in one of six colors, into 5 colors.

Let us further examine some methods for the coloring of the vertexes of the abstract graph $G_6$ on the base of the painting of graph's $H_6$ vertexes, which can be wittingly painted in three colors. These methods will be presented by the finite number of the algorithmic operators. And finally we'll show that all the methods above aimed at the coloring of graph's G vertexes in six colors can be implemented by the operators of graph's $G_6$ edges, which require the painting of graph's $H_6$ vertexes in three colors.
All this will be represented a little later in the following papers.

\section {Conclusions:}

\begin {enumerate}

\item
The chromatic number estimation for the planar conjugated triangulation allows introducing the new hypothesis, which is equivalent to Four Color Problem: the vertexes of the planar conjugated triangulation can be colored in three colors thus no adjacent vertexes will be co-lored equally.
\item
The introduction of the next dual graph allows bringing in another three new hypotheses, which are also equivalent to Four Color Problem:

\begin {enumerate}
\item
The edges of the planar conjugated triangulation (graph $H$) can always be colored in six colors in such a way, that no adjacent edges will be colored equally.
\item
Graph's $G(Q,\Gamma)$ is the adjacency graph of the planar conjugated triangulation $H(Q,V)$. Its vertexes $\{q_i\}$ always can be colored in six colors so, that each of the stated 6 colors can be presented as one combination $(C_3^2)$ from three on two (by pair), which are generated by the edges of the direct graph $H(Q,V)$ from its vertexes' colors.
\item
The chromatic number of any planar conjugated triangulation must be equal to not more than three.
\end {enumerate}
\item
The introduction of the abstract minimal graph gives us a possibility to prove some theorems:

\begin {enumerate}
\item
The abstract graph $R_k$ of the planar conjugated triangulation $H$ always has the chromatic number equal to: $\gamma(R_k )=3$.
\item
The examination of the properties of the vertex matrix of graph $R$ proves that it can exist at the number of the vertexes equal to 4.
\item
The conditions for the painting graph's $R$ vertexes in 3 colors at the requirement of its edge painting in 6 colors.
\end {enumerate}

\item
It is proved that the chromatic number of the abstract minimal graph $H_{min}$ is always equal to 3: $\gamma(H_{min})=3$.
\item
It is proved that the chromatic number of the planar conjugated triangulation is situated in the interval: $3\leq\gamma(H)\leq4$.
\item
The properties of the vertexes of graph $G$ are proved: they are the even ones and their degrees are not more than 6.
\item
It is proved that the chromatic number of graph $G$ is equal to 6.

\end {enumerate}

\section {Acknokledgments:}

A lot of thanks for those members of my family who undergone all the difficulties side by side with me and who encouraged me in my work.  

Also I will be very grateful to those readers, who will find and send me a word about the uncovered misprints or some errors in order to improve the text.  

\begin {thebibliography} {99}

\bibitem {Berge}\textsc {Berge, C.}:\ \textit {Theore des graphes et ses applications}, DUNOD, Paris, (1958), 319.

\bibitem {Brooks}\textsc {Brooks, R.}:\ \textit{On coloring the nodes of a network}, Proc. Cambridge Philos. Soc., 6 (37), (1941), 194-197. 

\bibitem {Chrisofides}\textsc{Chrisofides, N.}:\ \textit{Graph Theory. An Algorithmic approach}, Academic Press, London, New York, Paris (1975), 429.

\bibitem {Zukov}
A. A. Zukov. Theory of finite graphs. Science, Novosibirsk (1969)

\bibitem {Welsh}\textsc{Welsh, D.J.A., Powel, M.B.}:\ \textit {An upper bond on the chromatic number of a graph and its application to timetabling problem}, The Computer Journal, 10, (1967), 85.

\bibitem {Wilf}\textsc{Szekeres, G., Wilf, H.S.}:\ \textit{An inequelity for the chromatic number of a graph}, Journal of Combinatorial Theory, 4, (1968), 1-12.

\bibitem {Shannon}\textsc{Shannon, C., Weaver, W.}:\ \textit{A theorem on coloring the lines of a network}, Journal Math. and Phys., 28, (1949), 148-151 

\bibitem {Weaver}\textsc{Shannon, C., Weaver, W.}:\ \textit{The Mathematical Theory of Communication}, Urbana, Illinois, USA: The University of Illinois Press. (1949)

\bibitem {Malinina5}\textsc{Malinina, N.}:\ \textit{The equation of the existence for the planar triangulation}, ReaserchGate: https://www.researchgate.net/publication/252932560, 27p., 2013

\end {thebibliography} 

%\begin{acknowledgement}
%This is an acknowledgement
%\end{acknowledgement}

\end{document}